\def\BibTeX{{\rm B\kern-.05em{\sc i\kern-.025em b}\kern-.08em
    T\kern-.1667em\lower.7ex\hbox{E}\kern-.125emX}}
\pgfplotsset{width=10cm,compat=1.9}
\newtheorem{theorem}{Theorem}
\newtheorem{corollary}{Corollary}
\DeclareMathOperator*{\tr}{Tr}
\DeclareMathOperator*{\argmin}{arg\,min}
\DeclareMathOperator*{\locc}{LOCC}
\DeclareMathOperator*{\cptn}{CPTN}
\newcommand{\ketbra}[2]{|#1\rangle\langle#2|}
\newcommand{\ketbras}[1]{\ketbra{#1}{#1}}
\begin{document}

\title{Cutting a Wire with\\Non-Maximally Entangled States
\thanks{This work was partially funded by the BMWK project \textit{EniQmA} (01MQ22007B) and \textit{SeQuenC} (01MQ22009B).}
}

\author{
\IEEEauthorblockN{
Marvin Bechtold, 
Johanna Barzen, 
Frank Leymann, and 
Alexander Mandl} 
\IEEEauthorblockA{Institute of Architecture of Application Systems\\
University of Stuttgart\\ 
Universitätsstraße 38, 70569 Stuttgart, Germany\\
Email: \{bechtold, barzen, leymann, mandl\}@iaas.uni-stuttgart.de}}

\maketitle

\begin{abstract}
Distributed quantum computing supports combining the computational power of multiple quantum devices to overcome the limitations of individual devices.
Circuit cutting techniques enable the distribution of quantum computations via classical communication.
These techniques involve partitioning a quantum circuit into smaller subcircuits, each containing fewer qubits.
The original circuit's outcome can be replicated by executing these subcircuits on separate devices and combining their results.
However, the number of circuit executions required to achieve a fixed result accuracy with circuit cutting grows exponentially with the number of cuts, posing significant costs.
In contrast, quantum teleportation allows the distribution of quantum computations without an exponential increase in circuit executions.
Nevertheless, each teleportation requires a pre-shared pair of maximally entangled qubits for transmitting a quantum state, and non-maximally entangled qubits cannot be used for this purpose.
Addressing this, our work explores utilizing non-maximally entangled qubit pairs in wire cutting, a specific form of circuit cutting, to mitigate the associated costs. 
The cost of this cutting procedure reduces with the increasing degree of entanglement in the pre-shared qubit pairs.
We derive the optimal sampling overhead in this context and present a wire cutting technique employing pure non-maximally entangled states that achieves this optimal sampling overhead.
Hence, this offers a continuum between existing wire cutting and quantum teleportation.

\end{abstract}

\begin{IEEEkeywords}
Distributed Quantum Computing, Circuit Cutting, Quantum Teleportation, Entanglement
\end{IEEEkeywords}

\section{Introduction}
Quantum computing holds immense potential for solving problems intractable for classical computers~\cite{Cao2019,Giani2021}. 
However, current quantum devices face various constraints, including a limited number of available qubits, and scaling up to larger devices remains a considerable challenge~\cite{Preskill2018}. 
One promising approach to address this limitation involves utilizing multiple smaller quantum devices that can exchange classical~\cite{Avron2021,Dunjko2018} or, in the near future, quantum information~\cite{Cuomo2020,Khait2023}.
These modular systems enable distributed quantum computation, offering a pathway to scalability~\cite{Bravyi2022}. 
Moreover, as quantum devices mature, the integration of multiple quantum devices into the computing infrastructure is anticipated~\cite{Furutanpey2023_QuantumEdge}, further emphasizing the importance of distributed quantum computation.
\looseness=-1

Circuit cutting is a technique used to distribute the computation of a quantum circuit over multiple devices that can exchange only classical information~\cite{Bravyi2016,Brenner2023,Mitarai2021,Peng2019,Piveteau2022}.
This technique involves decomposing a large quantum circuit into smaller subcircuits by cutting it, allowing for their execution on multiple smaller quantum devices.
The subcircuit results can then be recombined through classical postprocessing to reproduce the original circuit's outcome. 
One specific circuit cutting technique is wire cutting, where a wire in a quantum circuit, i.e., a qubit, is interrupted through a series of measurements and subsequent qubit initalizations.
This allows for partitioning the execution of a large circuit across multiple smaller devices by cutting the connecting wires.
However, wire cutting increases the number of circuit executions required to accurately reproduce the result of the original circuit~\cite{Brenner2023}.
This overhead in required shots scales exponentially with the number of cuts that are performed.
Therefore, minimizing the overhead of each cut is crucial to extend the applicability of circuit cutting.

Quantum teleportation is an alternative technique for distributing quantum computations across multiple devices~\cite{Cuomo2020}.
In contrast to circuit cutting, it eliminates the need for additional shots when executing a quantum circuit across multiple devices by leveraging maximally entangled states. 
Based on a pre-shared maximally entangled state, quantum teleportation allows the transmission of quantum information between quantum devices by exchanging only two classical bits~\cite{Bennett1993}. 
Although generating these entangled qubit pairs between multiple commercial quantum devices is currently unattainable, it is anticipated to become feasible in the near future~\cite{Bravyi2022}.
\looseness=-1

Quantum teleportation and circuit cutting represent two extremes in the amount of entanglement utilized between the two participating devices.
Teleportation relies on maximally entangled states, while circuit cutting operates without any entanglement. 
However, neither of these techniques allows utilizing states with intermediate levels of entanglement, known as \textit{non-maximally entangled~(NME)} states.
To close this gap, this work explores the potential of leveraging NME states with varying degrees of entanglement. 
The primary objective is to investigate how these states can reduce the overhead associated with wire cutting.
Building on our earlier research~\cite{Bechtold2023_CircuitCuttingEntanglement_preprint}, this work contributes in two significant ways. 
Firstly, we establish the optimal overhead for wire cutting using arbitrary NME states in \Cref{theorem_overhead}. 
Secondly, \Cref{theorem_decomposition} introduces a wire cut that attains this optimal overhead employing pure NME states and is used in experiments.

The remainder of this work is structured as follows:
\Cref{sec:preliminaries} establishes the preliminaries required for wire cutting with NME states as presented in \Cref{sec:main-section}.
Subsequently, in \Cref{sec:num_experiments}, we present numerical experiments conducted with our newly introduced wire cut.
\Cref{seq:related_work} provides a summary of the related work and \Cref{sec:conclusion} concludes the paper.

\section{Preliminaries}\label{sec:preliminaries}
This section explores fundamental concepts underpinning the application of NME states in a wire cutting procedure, as detailed in \Cref{sec:main-section}.
Initially, NME states are described, and their entanglement is quantified.
The concept of quasiprobability simulation is introduced, and its role in simulating maximally entangled states and wire cutting is outlined. 
Finally, we provide a concise overview of quantum teleportation, focusing on its use in transmitting quantum information via NME states.

\subsection{Quantifying entanglement in NME qubit pairs}
Entanglement is a fundamental quantum mechanical resource in bipartite quantum systems, governed by their joint Hilbert space $A \otimes B$~\cite{Nielsen2009}. 
The composite system's state is described by a density operator $\rho_{AB}$, a positive Hermitian operator on $A \otimes B$ with unit trace.
The set of density operators is referred to as $D(A\otimes B)$.
The subscript in $\rho_{AB}$ optionally specifies the associated Hilbert spaces.
The evolution and manipulation of density operators are described by superoperators, which are linear operators that act on density operators in $D(A\otimes B)$.
For a superoperator $\mathcal{E}$ to be physically realizable, it must be completely positive and trace-nonincreasing, i.e., $0 \le \tr[\mathcal{E}(\rho)] \le \tr[\rho]$ for any $\rho \in D(A\otimes B)$~\cite{Nielsen2009}.
The set of all \textit{completely positive trace-nonincreasing~(CPTN)} superoperators on the density operators in $D(A\otimes B)$ is denoted as $\cptn(A\otimes B)$.

In bipartite systems, an entangled state $\rho_{AB} \in D(A\otimes B)$ cannot be factorized into independent states of the subsystems.
Hence, no  $\rho_A \in D(A)$ and $\rho_B \in D(B)$ exists such that $\rho_{AB} = \rho_A \otimes \rho_B$. 
A maximally entangled state $\Phi_{AB}$ is characterized by its reduced density operators being maximally mixed~\cite{Goyeneche2015}, i.e. $\tr_A[\Phi_{AB}] = I/\dim(B)$ and $\tr_B[\Phi_{AB}] = I/\dim(A)$.
An important set of operators on entangled states is $\locc(A,B) \subset \cptn(A\otimes B)$, including only local transformations on the subsystems $A$ and $B$, coordinated through classical communication~\cite{Chitambar2014}.
An operator $\Lambda\in\locc(A, B)$ cannot increase the entanglement degree of a bipartite state $\rho_{AB}$.
Thus, the entanglement degree of the transformed state $\Lambda(\rho_{AB})$ remains at most equal to that of $\rho_{AB}$~\cite{Chitambar2014}.

Focusing on cutting a single wire, we consider entangled two-qubit states in the following.
Thus, $A$ and $B$ are two dimensional Hilbert spaces and we use $\Phi$ to refer to the maximally entangled two-qubit state given by $\Phi_{AB} = \ketbra{\Phi}{\Phi}_{AB}$ with $\ket{\Phi} = \frac{1}{\sqrt{2}}(\ket{00} + \ket{11})$.
To quantify the amount of entanglement of an arbitrary two-qubit NME state given by its density operator $\rho_{AB}\in D(A\otimes B)$, we asses its similarity to a pure maximally entangled two-qubit state.
This assessment considers transformations from the set $\locc(A,B)$ as they cannot increase entanglement, and all pure maximally entangled two-qubit states can be transformed into each other using operators from $\locc(A,B)$~\cite{Nielsen1999}.
Hence, the entanglement of an arbitrary two-qubit NME state $\rho_{AB}$ is quantified by its maximal overlap with the fixed maximally entangled state $\Phi_{AB}$, considering all possible transformations $\Lambda\in\locc(A, B)$.
Based on~\cite[Equation 11]{Yuan2023}, this is given by
\begin{align}\label{eq:overlap}
    f(\rho_{AB}) := \max_{\Lambda \in \locc(A,B)}\braket{\Phi_{AB}|\Lambda(\rho_{AB})|\Phi_{AB}}.
\end{align}
This measure is an entanglement monotone~\cite{Vidal2000,Verstraete2003}, i.e. it is nonincreasing for $\Lambda \in \locc(A,B)$:
\begin{align}
    f(\Lambda(\rho_{AB}))\le f(\rho_{AB}).
\end{align}
It ranges for a two-qubit state from $\frac{1}{2}$ denoting the absence of entanglement to $1$ indicating maximal entanglement.

In the following, we describe pure NME states used in our wire cutting procedure and derive an explicit expression for their maximal overlap with the maximally entangled state.
By making use of the Schmidt decomposition~\cite{Nielsen2009}, any entangled pure two-qubit state $\ket{\psi}$ can be expressed as 
\begin{align}\label{eq:schmidt}
    \ket{\psi} &= p_0 \ket{\xi_0} \ket{\zeta_0} + p_1 \ket{\xi_1} \ket{\zeta_1}\\
    &= p_0 \left(\ket{\xi_0} \ket{\zeta_0} + \frac{p_1}{p_0} \ket{\xi_1} \ket{\zeta_1}\right), \label{eq:schmidt2}
\end{align}
with $p_i \in \mathbb{R}_{\geq 0}$, $p_0 \ne 0$, and orthonormal sets of one-qubit states $\{\ket{\zeta_i}\}$ and $\{\ket{\xi_i}\}$.
It is worth noting that the Schmidt decomposition of any bipartite pure state can be computed on near-term quantum devices, given a sufficient number of instances of the state~\cite{BravoPrieto2020}.
The representation in \Cref{eq:schmidt2} shows that any two-qubit state can be reformulated as 
\begin{align}
    \ket{\psi} &= (U_A \otimes U_B) (\ket{\Phi^k}).
\end{align}
Herein, $U_A= \ketbra{\xi_0}{0} + \ketbra{\xi_1}{1}$ and $U_B = \ketbra{\zeta_0}{0} + \ketbra{\zeta_1}{1}$ are local unitary basis transformations from the computational basis into the basis of the Schmidt decomposition, and $\ket{\Phi^k}$ is a generalization of the Bell basis state $\ket{\Phi}$ given as 
\begin{align}\label{eq:nme}
    \ket{\Phi^{k}} = K(\ket{00} + k\ket{11})
\end{align}
with $K := \frac{1}{\sqrt{1+ k^{2}}}$ and $k \in \mathbb{R}_{\ge 0}$. 
Thus, $k=p_1/p_0$ and $K=p_0$.
We represent the density operator of this state as $\Phi^k = \ketbra{\Phi^k}{\Phi^k}$. 
Furthermore, since $f$ is nonincreasing under operators from $\locc(A, B)$ and $U = (U_A \otimes U_B)$ is a local unitary operator on $A$ and $B$, it holds that 
\begin{align}
    f(\psi_{AB}) &= f(U\Phi^k_{AB}U^{\dagger}) \le f(\Phi^k_{AB}) \\
    f(\Phi^k_{AB}) &= f(U^{\dagger}\psi_{AB}U) \le f(\psi_{AB}).
\end{align}
As a result, $f(\psi_{AB}) = f(\Phi^k_{AB})$.
Therefore, we solely focus in the following on NME qubit pairs of the form in \Cref{eq:nme}.

For the pure state $\Phi^k$, the maximal overlap with $\Phi$ is 
\begin{align}
    f(\Phi^k) &= \braket{\Phi|\Phi^k|\Phi}\\
    &= \frac{\left(k + 1\right)^{2}}{2 \left(k^{2} + 1\right)}. \label{eq:singlet_fraction_nme}
\end{align}
Further details on this calculation can be found in Appendix~\ref{sec:singlet_fraction}.
Hence, the state $\Phi^k$ is separable for $k=0$ and $k\to\infty$, while it is maximally entangled for $k=1$.

\subsection{Quasiprobability simulation of non-local operators}\label{sec:quasi_prob_sim}
As all circuit cutting techniques can be interpreted as quasiprobability simulations of non-local operators~\cite{Brenner2023}, we introduce the required concepts in this section.
The main idea is to simulate the outcome of a quantum circuit distributed across subsystems $A$ and $B$ by probabilistically replacing non-local operators between these subsystems with local operators and classical communiction~\cite{Piveteau2022}. 

To perform these replacements, a non-local operator \mbox{$\mathcal{E}\in \cptn(A\otimes B)$} is decomposed as follows~\cite{Piveteau2022}:
\begin{align}\label{eq:QPD}
    \mathcal{E} = \sum_{i=1}^{m} c_i \mathcal{F}_i.
\end{align}
Here, $\mathcal{F}_i\in \locc(A, B)$ and $c_{i}$  are real coefficients, which can take negative values but must satisfy the constraint $\sum_{i=1}^{m} c_{i} = 1$.
As the coefficients can also take negative values, this is referred to as a quasiprobability distribution, thereby coining the term \textit{quasiprobability decomposition (QPD)} of $\mathcal{E}$.

Using this QPD, the expectation value of the operator $\mathcal{E}$ on state $\rho$
concerning observable $O$ is expressed as~\cite{Brenner2023}
\begin{align}\label{eq:expectation}
\tr[O\mathcal{E}(\rho)] = \kappa\sum_{i=1}^{m} p_{i} \tr[O\mathcal{F}_{i}(\rho)] \operatorname{sign}(c_i), 
\end{align}
where $\kappa := \sum_i |c_i|$, $p_i := |c_{i}|/\kappa$, and $\operatorname{sign}(c_i)$ denotes the sign of $c_i$.
This formulation enables the computation of the expectation value using a Monte Carlo approach using operators from $\locc(A,B)$ only~\cite{Brenner2023}: for each shot, an index $i$ is selected at random with probability $p_{i}$, then the circuits associated with the operator $\mathcal{F}_{i}$ are executed, and the outcome according to $O$ is weighted by $\operatorname{sign}(c_i) \kappa$. 
The sum of the weighted results estimates the expectation value of the original operator $\mathcal{E}$.
Although this estimator preserves the expectation value, it increases the variance by $\kappa$. 
To achieve a fixed statistical accuracy, estimating the expectation value within the error $\epsilon$ requires an additional $\mathcal{O}(\kappa^2/\epsilon^2)$ shots~\cite{Temme2017}.
This is commonly referred to as \textit{sampling overhead} quantified by $\kappa$.
\looseness=-1

Minimizing the sampling overhead is highly desirable, and thus, the search for a decomposition that minimizes the value of $\kappa$ is of great importance. 
The optimal sampling overhead with respect to the set $\locc(A,B)$ for a given operator $\mathcal{E}$ is referred to as $\gamma(\mathcal{E})$~\cite{Piveteau2022}.
It is given by
\begin{equation}\label{eq:min_sampling_overhead}
    \begin{split}
        \gamma(\mathcal{E}) := \min\biggl\{\sum_i |c_i| \,\bigg|\,&  \mathcal{E} = \sum_i c_i \mathcal{F}_i, c_i\in\mathbb{R}, \\
        &\mathcal{F}_i \in \locc(A,B)\biggr\}.
    \end{split}
\end{equation}

\subsection{Quasiprobability simulation of maximally entangled states}\label{sec:quasi_prob_sim_states}
While the previously introduced quasiprobability simulation of non-local transformations aims to produce a superoperator, simulating an entangled state involves generating a density operator. 
However, the quasiprobability simulation of a maximally entangled state $\Phi_{AB}$ from a bipartite NME state $\rho_{AB}$ can be reduced to simulating a superoperator $\mathcal{E}\in \cptn(A\otimes B)$ that transforms $\rho_{AB}$ into $\Phi_{AB}$, i.e., $\mathcal{E}(\rho_{AB}) = \Phi_{AB}$~\cite{Piveteau2022}.
Given that  $\mathcal{E}$ converts an NME state $\rho_{AB}$ into a maximally entangled one, it cannot be from $\locc(A,B)$~\cite{Chitambar2014}.

Nevertheless, a QPD as formulated in \Cref{eq:QPD} can be utilized to decompose $\mathcal{E}$ using operators within $\locc(A,B)$.
Therefore, a QPD of the maximally entangled state $\Phi_{AB}$ utilizing the state $\rho_{AB}$ is given by
\begin{align}
    \Phi_{AB} &= \mathcal{E}(\rho_{AB})\\
    &= \sum_{i=1}^m c_i \mathcal{F}_i(\rho_{AB}) \label{eq:QPD_state}
\end{align}
where $\mathcal{F}_i \in \locc(A,B)$. 
Using this QPD, the Monte Carlo approach from \Cref{eq:expectation} can be applied to compute the expectation value for an observable $O$.

This allows us to define the optimal sampling overhead for the quasiprobability simulation for the maximally entangled state $\Phi_{AB}$ given $\rho_{AB}$ as a resource as
\begin{align}
    \hat{\gamma}^{\rho_{AB}}(\Phi_{AB}) := \min\left\{ \gamma(\mathcal{E}) \,\middle|\,   \mathcal{E}(\rho_{AB}) = \Phi_{AB} \right\}
\end{align}
which, according to~\cite[Proposition 13]{Yuan2023}, is given by
\begin{align}\label{eq:optimal_overhead}
    \hat{\gamma}^{\rho_{AB}}(\Phi_{AB}) = \frac{2}{f(\rho_{AB})} - 1,
\end{align}
where $f(\rho_{AB})$ is the maximum overlap as defined in \Cref{eq:overlap}. 
Hence, the greater the degree of entanglement in the state $\rho_{AB}$, as measured by $f(\rho_{AB})$, the less sampling overhead is necessary for simulating a maximally entangled state.

\subsection{Wire cutting as quasiprobability simulation}
\begin{figure}
    \centering
    \input{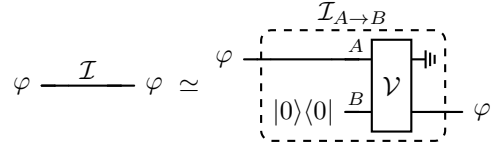}
    \vspace{-0.2cm}
    \caption{Identity $\mathcal{I}_{A\rightarrow B}$ between systems $A$ and $B$ modeled by a non-local operation $\mathcal{V}$ where $A$ is traced out and discarded, symbolized by $\begin{quantikz}[column sep=4pt]\qw&\ground{}\end{quantikz}$~\cite{Brenner2023}.}
    \label{fig:wire_cut_model}
\end{figure}
\begin{figure*}
    \centering
    \input{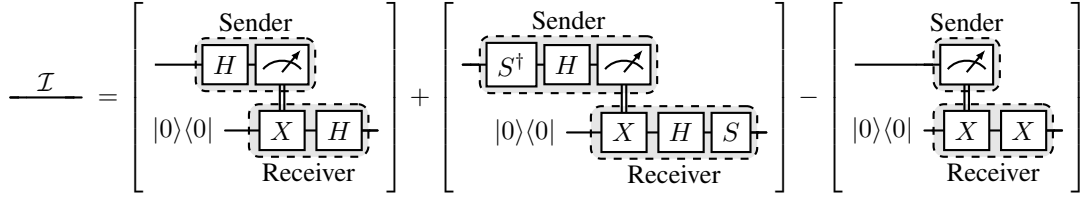}
    \vspace{-0.2cm}
    \caption{Optimal wire cut for a single wire~\cite{Harada2023}.}
    \vspace{-0.4cm}
    \label{fig:wire_cut_harada}
\end{figure*}
Initially proposed by Peng et al.~\cite{Peng2019}, wire cutting quasiprobabilistically simulates the transmission of a quantum state.
It enables simulating the transfer of a state from a qubit of Hilbert space $A$ on one quantum device to another qubit of Hilbert space $B$ on a different device. 
This allows the execution of a circuit to be split up between quantum devices by cutting wires.
To apply a quasiprobability simulation, as introduced in \Cref{sec:quasi_prob_sim}, to a wire, the wire must be represented as a non-local operator. 
This is depicted in \Cref{fig:wire_cut_model}, where an empty wire of a circuit on the left, represented by the identity operation $\mathcal{I}$, is modeled as a non-local operator $\mathcal{V}$ connecting $A$ and $B$ on the right.
This operator $\mathcal{V}$ acts as the non-local identity $\mathcal{I}_{A \rightarrow B}$ between qubits $A$ and $B$, when system $A$ is traced out~\cite{Brenner2023}: 
\begin{align}
\forall \varphi \in D(A): \operatorname{Tr}_{A}[\mathcal{V}(\varphi \otimes \ketbra{0}{0}_B)] = \varphi,
\end{align}
with $\ketbra{0}{0}_B$ being the initial state of qubit $B$.
By using a QPD for the non-local operator $\mathcal{V}$, i.e. $\mathcal{V}=\sum_i c_i \mathcal{F}_i$ with $\mathcal{F}_i \in \locc(A,B)$, wire cutting can be understood as the quasiprobabilistic simulation of the identity operator $\mathcal{I}_{A \rightarrow B}$: 
\begin{align}
\forall \varphi \in D(A): \sum_i c_i \operatorname{Tr}_{A}[\mathcal{F}_i(\varphi \otimes \ketbra{0}{0}_B)] = \varphi.
\end{align}

The minimal sampling overhead for cutting a single wire with local operations and classical communication has been identified as $\gamma(\mathcal{I}) = 3$~\cite{Brenner2023}. 
Harada et al.~\cite{Harada2023} demonstrated a representation of a wire cut with minimal sampling overhead using the following QPD of the one-qubit identity operator $\mathcal{I}$:
\begin{equation}\label{eq:wire_cut_harada}
    \begin{split}
        \mathcal{I}(\bullet)
    =&\sum_{i\in\{1,2\}} \sum_{j\in\{0,1\}} \mathrm{Tr}\left[ U_{i} \ketbras{j} U_{i}^{\dagger}(\bullet)_A \right]U_{i} \ketbras{j}_B U_{i}^{\dagger}\!\! \\
    &- \sum_{j\in\{0,1\}} \mathrm{Tr}\left[ \ketbras{j}(\bullet)_A \right]X \ketbras{j}_BX,
    \end{split}
\end{equation}
where $U_1=H$ and $U_2=SH$, with $H$ the Hadamard gate, $S$ the phase gate, and $X$ the Pauli $X$ gate.

The corresponding wire cut circuits are shown in \Cref{fig:wire_cut_harada}, with measurements performed on qubit $A$ at the sender's device and the corresponding states initialized on qubit $B$ at the receiver's side.
Classical communication between the devices is facilitated by a classical controlled-not gate, represented by a double line connecting the measurement and the $X$ gate.

\begin{figure}[t]
    \centering
    \input{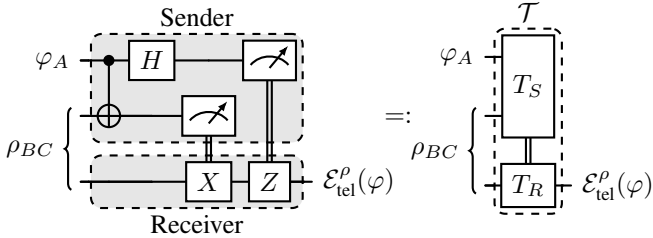}
    \vspace{-0.2cm}
    \caption{Quantum teleportation circuit using resource state $\rho$ with detailed sender and receiver operations (left) and their condensed notation (right).}
    \label{fig:teleportation_circuit}
\end{figure}

\subsection{Quantum Teleportation}
As opposed to quasiprobabilistically simulating the transmission of a qubit's state, quantum teleportation actually transmits the state using entanglement~\cite{Bennett1993}.
This protocol uses local operators only and involves the communication of only two classical bits without physically transmitting the state itself.
To ensure that the state is correctly transmitted, quantum teleportation requires a pair of maximally entangled qubits $\Phi$~\cite{Prakash2012} that are shared between sender and receiver.

The teleportation circuit consists of three qubits $A$, $B$, and $C$ and is depicted on the left side of \Cref{fig:teleportation_circuit} with an entangled resource state $\rho_{BC}$.
The sender performs a Bell basis measurement on the qubit $A$ containing the state $\varphi$ to be teleported and the qubit $B$ of the resource state. 
The measurement outcome is then transmitted to the receiver via a classical channel. 
Using this information, the receiver applies appropriate transformations to their part of the entangled pair, i.e. the qubit $C$, to reconstruct the original state $\varphi$ at the receiver's location.
The right side of \Cref{fig:teleportation_circuit} shows a condensed representation of the teleportation circuit, encapsulating the sender's and receiver's operations into the transformations $T_S$ and $T_R$, respectively.
The entire protocol is described as the operator $\mathcal{T}$, which acts on the state $\varphi_{A}$ and the resource $\rho_{BC}$, expressed as $\mathcal{T}(\varphi_{A} \otimes \rho_{BC}).$
\looseness=-1

Utilizing quantum teleportation with an arbitrary NME resource state $\rho_{BC}$, which may not be maximally entangled, results in the state given by~\cite{Gu2004}
\begin{align}\label{eq:tele}
    \mathcal{E}_{\text{tel}}^{\rho}(\varphi) &= \operatorname{Tr}_{AB}\left[\mathcal{T}(\varphi_A \otimes \rho_{BC})\right] \\
    &=\sum_{\sigma \in \{I, X, Y, Z\}}\braket{\Phi^{\sigma}|\rho|\Phi^{\sigma}}\sigma\varphi\sigma.\label{eq:tele2}
\end{align}
Here, $\ket{\Phi^{\sigma}}$ denotes the Bell basis states associated with the Pauli operators  $\sigma \in \{I, X, Y, Z\}$. 
These Bell states are defined as $\ket{\Phi^{\sigma}} = (\sigma \otimes I)\ket{\Phi}$.
Consequently, depending on the overlap $\braket{\Phi^{\sigma}|\rho|\Phi^{\sigma}}$ of the resource state $\rho$ with the different Bell states, corresponding Pauli errors are introduced during teleportation.

\begin{figure}
    \centering
    \input{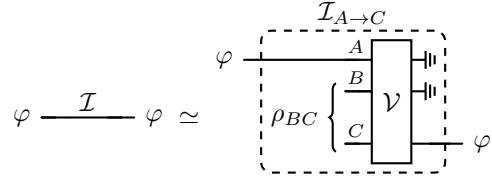}
    \vspace{-0.2cm}
    \caption{A wire modeled as a non-local three-qubit operation $\mathcal{V}$ on qubits $A$, $B$, and $C$, employing the NME state $\rho$.}
    \label{fig:wire_cut_model_ent}
\end{figure}
\begin{figure*}
    \centering
    \input{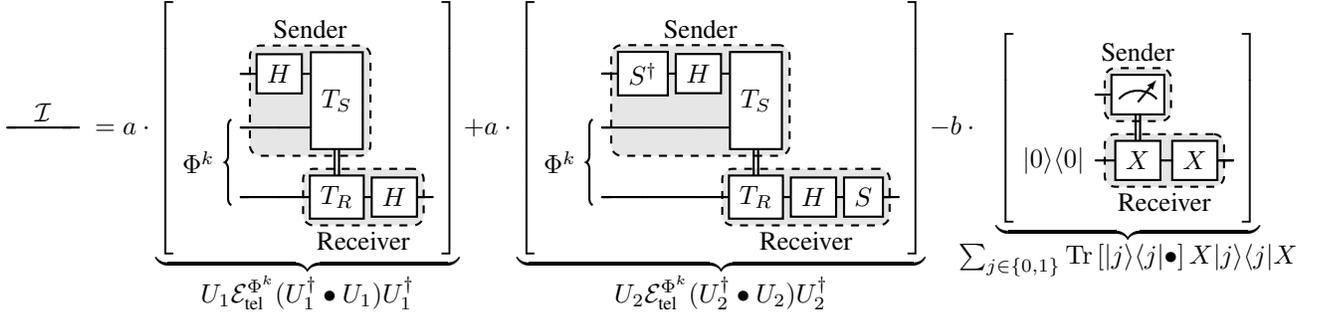}
    \vspace{-0.2cm}
    \caption{Wire cut that uses NME states $\ket{\Phi^k}$ with $a=\frac{k^2+1}{(k+1)^2}$ and $b=\frac{(k-1)^2}{(k+1)^2}$, employing the teleportation circuit from \Cref{fig:teleportation_circuit}.}
    \label{fig:cut_with_entanglement}
\end{figure*}  

\section{Optimal wire cutting with NME states}\label{sec:main-section}

In this section, we propose an enhancement to the wire cutting procedure by integrating NME qubit pairs, aiming to reduce the sampling overhead based on their degree of entanglement. 
Therefore, we first outline how wire cutting with NME states can be modeled and define its optimal sampling overhead. 
Subsequently, we will determine the optimal sampling overhead, as articulated in \Cref{theorem_overhead}. 
Following this, \Cref{theorem_decomposition} will present a QPD to realize this optimal sampling overhead with pure NME states.

To define a wire cut using an entangled state, we adapt the model as a non-local operator as described in \Cref{fig:wire_cut_model}~\cite{Brenner2023}.
Therefore, we represent the identity operator $\mathcal{I}$ as some non-local operator $\mathcal{V}$ acting on three qubits denoted by their Hilbert spaces $A$, $B$, and $C$, as depicted in \Cref{fig:wire_cut_model_ent}.
This allows the sharing of an NME state $\rho_{BC}$ between a sender system $A\otimes B$ and a receiver system $C$.
Similar to wire cutting without entangled states, operator $\mathcal{V}$ mimics the identity $\mathcal{I}_{A \rightarrow C}$ between $A$ and $C$ when system $A \otimes B$ is traced out:
\begin{align}
        \forall \varphi \in D(A):\operatorname{Tr}_{AB}[\mathcal{V}(\varphi\otimes \rho_{BC})] = \varphi.
\end{align}

In analogy to \Cref{eq:min_sampling_overhead}, the optimal sampling overhead for wire cutting with NME states $\rho$ is denoted as $\gamma^{\rho}(\mathcal{I})$, where the superscript $\rho$ distinguishes it from the sampling overhead without entangled states.
It is defined as:
\begin{equation}\label{eq:min_sampling_overhead_wire_nme}
    \begin{split}
        \gamma^{\rho}(\mathcal{I}) := \min\biggl\{\sum_i |c_i|  \,\bigg|\,&  \mathcal{V} = \sum_i c_i \mathcal{F}_i, c_i\in\mathbb{R},\\ &\mathcal{F}_i \in \locc(A\otimes B, C), \\
        &\forall \varphi \in D(A):\\
        &\quad\operatorname{Tr}_{AB}[\mathcal{V}(\varphi\otimes \rho_{BC})] = \varphi\biggr\}.
    \end{split}
\end{equation}
Based on this definition, the following theorem quantifies the sampling overhead $\gamma^{\rho}(\mathcal{I})$. 

\begin{theorem} \label{theorem_overhead}
The optimal sampling overhead for a wire cut of a single-qubit identity operator $\mathcal{I}$ using an arbitrary two-qubit NME resource state described by density operator $\rho$ is  
\begin{equation}
    \gamma^{\rho}(\mathcal{I}) = \frac{2}{f(\rho)} -1.
\end{equation}
\end{theorem}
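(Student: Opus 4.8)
The plan is to prove the claimed value by establishing matching upper and lower bounds, in both cases exploiting the tight link between cutting a wire and quasiprobabilistically simulating a maximally entangled state, whose optimal overhead $\hat{\gamma}^{\rho}(\Phi) = 2/f(\rho) - 1$ is already recorded in \Cref{eq:optimal_overhead}. The guiding idea is that teleportation with a \emph{perfect} maximally entangled resource realizes the wire exactly using only $\locc$, so the entire cost of the wire cut can be pushed into the cost of manufacturing $\Phi$ out of $\rho$.

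For the upper bound $\gamma^{\rho}(\mathcal{I}) \le 2/f(\rho) - 1$, I would start from an optimal QPD of the maximally entangled state, $\Phi_{BC} = \sum_i c_i \mathcal{G}_i(\rho_{BC})$ with $\mathcal{G}_i \in \locc(B,C)$ and $\sum_i |c_i| = \hat{\gamma}^{\rho}(\Phi)$, guaranteed by \Cref{eq:optimal_overhead}. Substituting this into the exact teleportation identity $\operatorname{Tr}_{AB}[\mathcal{T}(\varphi_A \otimes \Phi_{BC})] = \varphi$ (the $\Phi$-resource case of \Cref{eq:tele}) and using linearity yields $\varphi = \sum_i c_i \operatorname{Tr}_{AB}[\mathcal{F}_i(\varphi_A \otimes \rho_{BC})]$ with $\mathcal{F}_i := \mathcal{T}\circ(\mathrm{id}_A \otimes \mathcal{G}_i)$. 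Since $\mathcal{T}$ lies in $\locc(A\otimes B, C)$ and $\mathcal{G}_i$, extended trivially on $A$, also lies in $\locc(A\otimes B, C)$, each $\mathcal{F}_i$ is in $\locc(A\otimes B, C)$; this is a feasible decomposition for \Cref{eq:min_sampling_overhead_wire_nme} of total weight $\sum_i|c_i| = 2/f(\rho)-1$.

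For the matching lower bound I would invoke channel--state duality. Take any feasible wire-cut QPD $\mathcal{V} = \sum_i c_i \mathcal{F}_i$ with $\mathcal{F}_i \in \locc(A\otimes B, C)$ attaining $\gamma^{\rho}(\mathcal{I})$, and feed one half of a locally prepared maximally entangled state $\Phi_{A'A}$ through the induced channels $\mathcal{E}_i(\bullet) = \operatorname{Tr}_{AB}[\mathcal{F}_i(\bullet \otimes \rho_{BC})]$. Because $\sum_i c_i \mathcal{E}_i = \mathcal{I}_{A\to C}$ and the Choi state of the identity channel is maximally entangled, applying $\mathrm{id}_{A'}\otimes(\,\cdot\,)$ to $\Phi_{A'A}$ gives $\Phi_{A'C} = \sum_i c_i \mathcal{H}_i(\rho_{BC})$, where $\mathcal{H}_i(\rho_{BC}) = \operatorname{Tr}_{AB}[(\mathrm{id}_{A'}\otimes \mathcal{F}_i)(\Phi_{A'A}\otimes \rho_{BC})]$. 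Placing the reference $A'$ and the ancilla $\Phi_{A'A}$ entirely on the sender's side makes each $\mathcal{H}_i$ an $\locc$ map that consumes only the cross-cut resource $\rho$; hence $\mathcal{E} := \sum_i c_i \mathcal{H}_i$ satisfies $\mathcal{E}(\rho_{BC}) = \Phi_{A'C}$ with $\gamma(\mathcal{E}) \le \sum_i|c_i| = \gamma^{\rho}(\mathcal{I})$. By the definition of $\hat{\gamma}^{\rho}(\Phi)$ together with \Cref{eq:optimal_overhead} this forces $\gamma^{\rho}(\mathcal{I}) \ge \hat{\gamma}^{\rho}(\Phi) = 2/f(\rho)-1$, closing the argument.

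The main obstacle I anticipate is the careful bookkeeping of the $\locc$ bipartition in the lower bound: one must verify that preparing $\Phi_{A'A}$, applying $\mathcal{F}_i$, and tracing out $A$ and $B$ never moves entanglement across the sender--receiver cut beyond what $\rho_{BC}$ already supplies, so that $\mathcal{H}_i$ genuinely belongs to $\locc$ and the resource entering $\hat{\gamma}^{\rho}(\Phi)$ is exactly $\rho$ with the same value $f(\rho)$. The relabeling of the sender qubit from the input $B$ to the reference $A'$ must likewise be justified as a free local operation, so that the state-simulation overhead $\hat{\gamma}^{\rho}(\Phi)$ applies verbatim.
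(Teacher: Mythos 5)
Your proposal is correct and follows essentially the same route as the paper: the upper bound composes an optimal QPD of $\Phi$ from $\rho$ with the teleportation protocol, and the lower bound feeds half of a locally prepared maximally entangled pair through the wire-cut QPD to obtain a QPD for $\Phi$ across the cut (the paper writes this Choi-state argument out explicitly with an ancillary preparation and a swap, but the content is identical). The bipartition bookkeeping you flag as a potential obstacle is handled in the paper exactly as you anticipate, by keeping the reference qubit and all ancillas on the sender's side so the resulting maps remain in $\locc$.
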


The detailed proof that reduces the sampling overhead $\gamma^{\rho}(\mathcal{I})$ of the wire cut to the sampling overhead $\hat{\gamma}^{\rho}(\Phi)$ of simulating the maximally entangled state using $\rho$ is provided in Appendix~\ref{sec:proof_overhead}.
Consequently, the higher the degree of entanglement in the resource state $\rho$ provided in the wire cut, the lower the resulting sampling overhead $\gamma^{\rho}(\mathcal{I})$.
In the absence of entanglement, i.e., $f(\rho) = \frac{1}{2}$, the sampling overhead aligns with the previously established optimal value for the one-qubit wire cut, i.e., $\gamma(\mathcal{I})=3$~\cite{Brenner2023}.
Conversely, the sampling overhead is eliminated for maximally entangled states, where $f(\rho) = 1$.
This is attributed to the feasibility of employing standard quantum teleportation under these conditions~\cite{Bennett1993}.
As a result, the following relationship is established for the use of pure NME states $\Phi^k$:
\begin{corollary}\label{corollary_pure_overhead}
The optimal sampling overhead for a wire cut of a single-qubit identity operator $\mathcal{I}$ using pure NME resource states $\Phi^k$ is given by 
\begin{equation}
    \gamma^{\Phi^k}(\mathcal{I}) = \frac{4(k^2+1)}{(k+1)^{2}} -1.
\end{equation}
\end{corollary}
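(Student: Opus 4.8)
The plan is to derive the corollary as a direct specialization of \Cref{theorem_overhead} to the pure case. Since $\Phi^k$ is a pure two-qubit state and therefore a particular instance of a two-qubit NME resource state described by a density operator, \Cref{theorem_overhead} applies verbatim, giving $\gamma^{\Phi^k}(\mathcal{I}) = \frac{2}{f(\Phi^k)} - 1$. The only remaining task is to substitute the explicit value of the maximal overlap $f(\Phi^k)$ and simplify.

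First I would invoke \Cref{theorem_overhead} with $\rho = \Phi^k$, which reduces the entire problem to evaluating $f(\Phi^k)$, the maximal overlap of $\Phi^k$ with the maximally entangled state $\Phi$. This quantity has already been computed in \Cref{eq:singlet_fraction_nme}, where it is shown that $f(\Phi^k) = \frac{(k+1)^2}{2(k^2+1)}$. I would then substitute this into the formula from \Cref{theorem_overhead}: taking the reciprocal gives $\frac{1}{f(\Phi^k)} = \frac{2(k^2+1)}{(k+1)^2}$, hence $\frac{2}{f(\Phi^k)} = \frac{4(k^2+1)}{(k+1)^2}$, and subtracting one yields the claimed value $\gamma^{\Phi^k}(\mathcal{I}) = \frac{4(k^2+1)}{(k+1)^2} - 1$.

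Because the argument is a single substitution followed by elementary algebra, there is no genuine obstacle here; the only point meriting a moment's care is confirming that $\Phi^k$ is an admissible resource state for \Cref{theorem_overhead}, which holds since any pure state defines a valid two-qubit density operator. As a consistency check I would verify the endpoints: at $k=1$ the formula gives $\frac{4 \cdot 2}{4} - 1 = 1$, recovering the zero-overhead teleportation regime where $\Phi^k$ is maximally entangled, while the separable limits $k \to 0$ and $k \to \infty$ both yield $4 - 1 = 3$, matching the known entanglement-free single-wire overhead $\gamma(\mathcal{I}) = 3$. These checks confirm that the corollary interpolates exactly between the two extremes discussed after \Cref{theorem_overhead}.
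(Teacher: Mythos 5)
Your proposal is correct and follows exactly the paper's own one-line proof: apply \Cref{theorem_overhead} with $\rho = \Phi^k$ and substitute $f(\Phi^k) = \frac{(k+1)^2}{2(k^2+1)}$ from \Cref{eq:singlet_fraction_nme}. The endpoint checks at $k=1$ and $k\to 0,\infty$ are a sensible addition but do not change the argument.
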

\begin{proof}
This follows from \Cref{theorem_overhead} and \Cref{eq:singlet_fraction_nme}.
\end{proof}

To achieve this optimal sampling overhead given pure NME states $\Phi^k$, the following theorem provides the optimal QPD.

\begin{theorem} \label{theorem_decomposition}
The single-qubit identity operator $\mathcal{I}$ can be decomposed using quantum teleportation described by $\mathcal{E}_{\text{tel}}^{\Phi^k}$ with pure NME resource states $\Phi^k$ as 
\begin{equation}\label{eq:thrm2}
    \begin{split}
    \mathcal{I}(\bullet) =&\phantom{-} \frac{k^2+1}{(k+1)^2}\sum_{i\in\{1,2\}} U_i\mathcal{E}_{\text{tel}}^{\Phi^k}(U_i^{\dagger}(\bullet) U_i)U_i^{\dagger}\\
    &- \frac{(k-1)^2}{(k+1)^2}\sum_{j\in\{0,1\}} \tr\left[ \ketbras{j}(\bullet) \right]X \ketbras{j}X,
    \end{split}
\end{equation}
where $U_1=H$ and $U_2=SH$. 
This decomposition achieves the optimal sampling overhead from \Cref{corollary_pure_overhead}.
\end{theorem}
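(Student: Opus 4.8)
The plan is to establish the theorem by two independent verifications: that the right-hand side of \Cref{eq:thrm2} evaluates to the identity channel, and that the associated sampling overhead equals $\gamma^{\Phi^k}(\mathcal{I})$ from \Cref{corollary_pure_overhead}.

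First I would make the teleportation map explicit by evaluating \Cref{eq:tele2} for the resource $\rho = \Phi^k$. Since $\ket{\Phi^k} = K(\ket{00}+k\ket{11})$ is supported entirely on $\{\ket{00},\ket{11}\}$, only the Bell states $\ket{\Phi^I}$ and $\ket{\Phi^Z}$ have nonzero overlap with it, while $\braket{\Phi^X|\Phi^k} = \braket{\Phi^Y|\Phi^k} = 0$. Computing the two surviving overlaps yields the phase-flip channel
\[
\mathcal{E}_{\text{tel}}^{\Phi^k}(\varphi) = \frac{(1+k)^2}{2(1+k^2)}\,\varphi + \frac{(1-k)^2}{2(1+k^2)}\,Z\varphi Z,
\]
which I abbreviate as $\alpha\varphi + \beta Z\varphi Z$. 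This is the key structural fact: with a pure NME resource, teleportation introduces only a $Z$ error, whose weight is controlled by $k$.

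Next I would push this channel through the local basis changes $U_1 = H$ and $U_2 = SH$. Using $HZH = X$ and $SHZHS^{\dagger} = SXS^{\dagger} = Y$, the two teleportation branches become $\alpha\varphi + \beta X\varphi X$ and $\alpha\varphi + \beta Y\varphi Y$, respectively; that is, the chosen bases convert the single $Z$ error into an $X$ error and a $Y$ error. In parallel I would rewrite the measure-and-prepare term, using $X\ketbras{0}X = \ketbras{1}$ and $X\ketbras{1}X = \ketbras{0}$ (equivalently $X\cdot\tfrac12(\varphi + Z\varphi Z)\cdot X$), to obtain $\sum_{j\in\{0,1\}}\tr[\ketbras{j}\varphi]X\ketbras{j}X = \tfrac12(X\varphi X + Y\varphi Y)$. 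Substituting $a = \tfrac{k^2+1}{(k+1)^2}$ and $b = \tfrac{(k-1)^2}{(k+1)^2}$, the right-hand side of \Cref{eq:thrm2} groups into three Pauli terms; I would then check that the coefficient of $\varphi$ is $2a\alpha = 1$, while the coefficients of $X\varphi X$ and $Y\varphi Y$ are both $a\beta - \tfrac{b}{2} = 0$, so the right-hand side reduces exactly to $\varphi = \mathcal{I}(\varphi)$. This cancellation --- the $X$ and $Y$ errors produced by the two teleportation branches being removed precisely by the negatively weighted measure-and-prepare term --- is the crux of the argument, and the coefficients $a,b$ are exactly what makes it work.

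Finally, for the overhead I would note that all three maps lie in $\locc$: the teleportation branches use only local operations and classical communication, and the operation $\sum_j \tr[\ketbras{j}\bullet]X\ketbras{j}X$ is a single measure-and-prepare $\locc$ channel. With coefficients $a, a, -b$, the normalization $\sum_i c_i = 2a - b = 1$ confirms a valid QPD, and the sampling overhead is $\kappa = 2a + b = \tfrac{3k^2 - 2k + 3}{(k+1)^2}$. Comparing with \Cref{corollary_pure_overhead} shows this equals $\gamma^{\Phi^k}(\mathcal{I})$, so the decomposition is optimal. The main obstacle I anticipate is bookkeeping rather than conceptual: one must treat the two branches of the measure-and-prepare step as one $\locc$ channel (not two subnormalized pieces) so that it contributes $b$ rather than $2b$ to the overhead, which is precisely what is needed to match the optimum.
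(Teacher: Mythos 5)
Your proposal is correct and follows essentially the same route as the paper's proof: compute the Bell-state overlaps to see that teleportation with $\Phi^k$ is a pure dephasing channel, conjugate by $H$ and $SH$ to convert the $Z$ error into $X$ and $Y$ errors, identify the measure-and-prepare term with $\tfrac12(X\varphi X + Y\varphi Y)$ so the error terms cancel, and sum $2a+b$ for the overhead. The only cosmetic difference is that you verify the stated identity by matching Pauli coefficients, whereas the paper derives it by rearranging; your closing remark about counting the measure-and-prepare step as a single $\locc$ channel matches how the paper tallies the overhead.
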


The proof of \Cref{theorem_decomposition} is given in Appendix~\ref{sec:proof_decomposition}.
The quantum circuits in \Cref{fig:cut_with_entanglement} illustrate this theorem's application. 
They contain two teleportation circuits where the transmitted state undergoes a transformation involving $H$ and $S$ before teleportation, followed by the inverse transformation after the teleportation.
Each teleportation process uses an instance of NME state $\ket{\Phi^k}$.
The decomposition also includes a circuit that measures the qubit state and then initializes a state corresponding to the inverse of the measured result.
This circuit does not consume an entangled state.
The decomposition in \Cref{theorem_decomposition} generalizes the result from Harada et al.~\cite{Harada2023} for a single-qubit wire cut as given in \Cref{eq:wire_cut_harada}.
The measurement and subsequent initializations in the first two circuits of the wire cut in \Cref{fig:wire_cut_harada} are replaced by a quantum teleportation using $\ket{\Phi^k}$.
As a result, the coefficients in the QPD decrease when the entanglement in the resource state $\ket{\Phi^k}$ increases.
Correspondingly, this results in a reduction of the associated sampling overhead.

When sampling from the QPD of \Cref{theorem_decomposition}, the required number of entangled states $\ket{\Phi^k}$ corresponds to the total number of teleportations performed. 
This number is proportional to $\frac{2(k^2+1)}{(k+1)^2} = \bra{\Phi}\Phi^k\ket{\Phi}^{-1}$. 
As a result, an increase in the entanglement level represented by values of $k$ closer to $1$ leads to a decrease in the number of entangled states needed to achieve the desired level of statistical accuracy.

\section{Numerical Experiments}\label{sec:num_experiments}

This section presents a numerical demonstration using a simulator that shows the advantages of incorporating NME states in wire cutting.
The code and all data generated are publicly available at~\cite{codeanddata}.
We evaluate the error in the expectation values resulting from applying the wire cut of \Cref{theorem_decomposition} to randomly sampled quantum states.
Specifically, we measure the expectation value of Pauli $Z$.
The relationship between the error and the total number of executed shots shows the sampling overhead necessary for replicating the expectation value with wire cutting and achieving a certain accuracy.

For each random state, the cutting procedure is conducted with NME states according to \Cref{eq:nme} with varying degrees of entanglement measured by their overlap with the maximally entangled state as introduced in \Cref{eq:overlap}.
We follow a specific procedure to initialize a qubit in a random state: 
A unitary matrix $W$ is randomly sampled~\cite{Mezzadri2007} and applied to the initial state $\ket{0}$. 
The resulting state is given by $W\!\ket{0}$.
To determine the error, the exact expectation value is computed classically by calculating $\braket{Z}_{W\!\ket{0}} = \braket{0|W^{\dagger}ZW|0}$.

Next, the same expectation value with wire cutting is computed by sampling from the simulator.
The wire cut is applied to the state $W\!\ket{0}$, resulting in three subcircuits as illustrated in \Cref{fig:cut_with_entanglement}.
The expectation value of each subcircuit is measured, and the results are combined following \Cref{theorem_decomposition}. 
To execute these subcircuits, the Qiskit Aer simulator is utilized~\cite{Qiskit}. 
A fixed number of shots is allocated collectively to all three subcircuits, distributed proportionally to their coefficients as defined in \Cref{theorem_decomposition}.
The error is the absolute deviation between the sampled expectation value $\braket{Z}_{W\!\ket{0}}^{\text{sample}}$ with wire cutting and the exact expectation value $\braket{Z}_{W\!\ket{0}}$:
\begin{align}
    \epsilon:=\left|\braket{Z}_{W\!\ket{0}}^{\text{sample}} - \braket{Z}_{W\!\ket{0}}\right|.
\end{align}
We vary the total number of shots in the experiments between 0 and 5000 to evaluate the error depending on the number of shots.
Additionally, this experiment is performed for varying amounts of entanglement $f(\Phi^k) \in \{0.5, 0.6, 0.7, 0.8, 0.9, 1.0\}$ in the pre-shared qubit pair $\ket{\Phi^k}$.
Furthermore, we consider 1000 random states $W\!\ket{0}$ as input for this evaluation and compute the average error over all inputs.

\begin{figure}[t]
    \centering
    \includegraphics{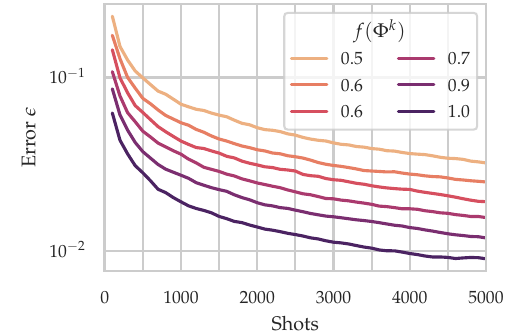}
    \caption{Average error in the expectation value when applying wire cutting with different degrees of entanglement.}
    \label{fig:experiment_plot} 
\end{figure}

The results are presented in \Cref{fig:experiment_plot}, where the error is plotted on a logarithmic scale.
They demonstrate a clear relationship between the degree of entanglement in the resource states, quantified by $f(\Phi^k)$, and the error showing the deviation from the exact result.
Higher entanglement degrees correspond to lower errors for a fixed number of shots. 
The case where $f(\Phi^k)=1$ serves as a baseline, employing quantum teleportation with maximally entangled states.
In this scenario, there is no sampling overhead from the QPD, and the error arises only from statistical errors caused by the finite number of shots since the exact state is teleported.
Conversely, when there is no entanglement $f(\Phi^k)=0.5$, default wire cutting is performed, leading to the largest errors.
These observations align with the theoretical considerations regarding the sampling overhead of wire cutting with NME states in \Cref{sec:main-section}.

\section{Related Work}\label{seq:related_work}

The quasiprobability simulation of maximally entangled states using NME states, as outlined in \Cref{sec:quasi_prob_sim_states}, is explored extensively in the context of virtual entanglement distillation \cite{Yuan2023}.
In contrast to traditional entanglement distillation, which tries to convert an NME state into a physical state of higher entanglement~\cite{Bennett1996a,Bennett1996b,Rozpedek2018}, virtual distillation replicates only its measurement statistics via quasiprobability simulation.
This method has recently been tested experimentally~\cite{Zhang2023b}.
The resulting virtual entangled states can be utilized in various quantum protocols, such as teleportation.  
Consequently, employing these virtual states in quantum teleportation enables the implementation of a wire cut.
However, our research takes a different direction, focusing on the direct implementation of a wire cut with NME states, which can be seen as a virtual distillation of the identity operator~\cite{Yuan2023}.
This direct approach eliminates the need to separately simulate a maximally entangled state before its application in teleportation.

Moreover, an alternative circuit cutting technique to wire cutting is gate cutting~\cite{Mitarai2021,Piveteau2022,Ufrecht2023a,Ufrecht2023b}. 
It involves decomposing multi-qubit gates to cut a circuit. 
Therein, a quasiprobability simulation of the corresponding non-local operator of the gate is applied as described in \Cref{sec:quasi_prob_sim}.
Depending on the characteristics of the circuit, either a wire cut or gate cut can be more favorable to achieve a minimal sampling overhead~\cite{Brenner2023}. 

Additionally, Brenner et al.~\cite{Brenner2023} show that cutting multiple wires together can effectively reduce the sampling overhead, compared to cutting each one individually.
Methods for automatically finding optimal positions for wire and gate cuts have been developed~\cite{Tang2021,Brandhofer2023a}.
Moreover, to address finite-shot error, maximum likelihood methods have been introduced in wire cutting~\cite{Perlin2021}.
Although circuit cutting primarily aims to reduce circuit size, some studies empirically demonstrate that executing smaller subcircuits can enhance overall results~\cite{Ayral2021,Bechtold2023,Perlin2021}.
\looseness=-1

Additionally, previous research has investigated using NME states as resource states in quantum teleportation. 
When employing the original protocol with NME states, the fidelity of the teleported state is reduced~\cite{Prakash2012}. 
Unit fidelity can only be achieved when using maximally entangled states. 
However, the modified probabilistic teleportation protocol allows the teleportation of an unknown state with NME resource states while maintaining maximal fidelity~\cite{Agrawal2002,Pati2004}. 
Nevertheless, this protocol can fail, so it must be repeated if attempts are unsuccessful. 
This repetition introduces an overhead to achieve the desired number of successful teleportations.

\section{Conclusion}\label{sec:conclusion}
This work demonstrates that wire cutting can leverage NME states to lower its associated cost in terms of the sampling overhead when compared to wire cutting without entanglement. 
Increasing the degree of entanglement in the resource states can effectively reduce the sampling overhead. 
The optimal sampling overhead for arbitrary NME states is derived in \Cref{theorem_overhead}.
This result highlights the significance of entanglement as a valuable computational resource. 
Moreover, \Cref{theorem_decomposition} presents a wire cut employing pure NME states that achieves the optimal sampling overhead. 
With the findings presented in this paper, we can now bridge the gap between circuit cutting and quantum teleportation using NME resource states.
Consequently, this advancement significantly increases the flexibility in implementing distributed quantum computing.

Future work can explore wire cutting protocols using mixed NME states, considering the resilience and performance of these protocols in the presence of noise inherent in contemporary quantum devices. 
Additionally, using NME states for multiple wire cuts in parallel may lead to further reductions in sampling overhead, as observed in wire cutting without NME states \cite{Brenner2023}.
Another interesting research avenue is how the results of using NME states in wire cutting transfer to gate cutting techniques to increase the efficiency of decomposing multi-qubit gates.

\bibliographystyle{IEEEtran}
\bibliography{bibliography}

\appendices
\section{Maximal overlap for $\ket{\Phi^k}$}\label{sec:singlet_fraction}

For pure states $\ket{\psi}_{AB}$, there are existing results~\cite[Theorem 15]{Regula2019} that show that the overlap $f(\psi_{AB})$ from \Cref{eq:overlap} relates to the $m$-distillation norm with $m=2$ as
\begin{align}\label{eq:pure_overlap}
    f(\psi_{AB}) = \frac{1}{2}\|\ket{\psi}_{AB}\|^2_{[2]}.
\end{align}
The $m$-distillation norm is defined as~\cite[Theorem 2]{Regula2018}
\begin{align}\label{eq:m_distill_norm}
    \|\ket{\psi}_{AB}\|_{[m]} := \|\zeta^{\downarrow}_{1:j^*}\|_{1} + \sqrt{j^*}\|\zeta^{\downarrow}_{j^*+1:d}\|_{2}
\end{align}
with $d = \min(\dim(A),\dim(B))$, and $\|\bullet\|_{1}$, $\|\bullet\|_{2}$ representing the $1$-norm and $2$-norm of vectors, respectively.
Moreover, $\zeta^{\downarrow}_{1:j}$ represents the vector of the $j$ largest Schmidt coefficients of $\ket{\psi}_{AB}$, and $\zeta^{\downarrow}_{j+1:d}$ contains the $d-j$ smallest Schmidt coefficients. 
The vector $\zeta^{\downarrow}_{a:b}$ is defined as the zero vector for $a>b$. 
The value $j^*$ is defined as
\begin{align}\label{eq:k_star}
j^* := \argmin_{1\le j \le m}\frac{1}{j}\| \zeta^{\downarrow}_{m-j+1:d} \|_{2}^2.
\end{align}

To calculate $f(\Phi^k)$ with \Cref{eq:pure_overlap}, the $2$-distillation norm of $\ket{\Phi^k}$ must be computed.
Since $m= 2$, \Cref{eq:k_star} allows two possible values of $j^{*}$, nameley $j^{*} \in \{1, 2\}$.
However, since $\ket{\Phi^k}$ has only two non-zero Schmidt coefficients, $K$ and $kK$ with $K=\sqrt{1+k^2}^{-1}$, both values $j^{*}=1$ and $j^{*}=2$ yield the same result in \Cref{eq:m_distill_norm}: 
\begin{align}
    \text{for } j^{*}=1 :\|\zeta^{\downarrow}_{1:1}\|_{1} + \sqrt{1}\|\zeta^{\downarrow}_{1+1:d}\|_{2} &= \|\zeta^{\downarrow}_{1:2}\|_{1}, \label{eq:mdist1}\\
    \text{for } j^{*}=2 :\|\zeta^{\downarrow}_{1:2}\|_{1} + \sqrt{2}\|\zeta^{\downarrow}_{2+1:d}\|_{2} &= \|\zeta^{\downarrow}_{1:2}\|_{1}\label{eq:mdist2}.
\end{align}
Herein, \Cref{eq:mdist1} holds since the second summand is the 2-norm over a one-element vector, which results in the absolute value of its element, and \Cref{eq:mdist2} holds since the second summand is the 2-norm over the zero vector.

Consequently, by using \Cref{eq:m_distill_norm} with \Cref{eq:mdist1,eq:mdist2} the $2$-distillation norm of $\ket{\Phi^k}$ is determined by the $1$-norm of its Schmidt coefficients:
\begin{align}
    \|\ket{\Phi^k}\|_{[2]} &= \|\zeta^{\downarrow}_{1:2}\|_{1} \\
    &= \|(K, kK)^T\|_{1} \\
    &= |K| + |kK| \\
    &= K(1+k)  \label{eq:positive}
\end{align}
where the last equality holds since $k\in\mathbb{R}_{\ge 0}$.
Therefore, by using \Cref{eq:pure_overlap}:
\begin{align}
    f(\Phi^k) &= \frac{1}{2}\|\ket{\Phi^k}\|^2_{[2]} \\
    &= \frac{1}{2} (K(k+1))^2  \label{eq:overlap_fidelity}\\
    &= \frac{(k+1)^2}{2(k^2 + 1)}.
\end{align}

Furthermore, for the pure state $\Phi^k$, it holds that
\begin{align}
    \braket{\Phi|\Phi^k|\Phi} &= \left|\braket{\Phi|\Phi^k}\right|^2\\
    &= \left| \frac{1}{\sqrt{2}}K + \frac{1}{\sqrt{2}}kK \right|^2\\
    &= \frac{1}{2} (K(k+1))^2,
\end{align}
which, using \Cref{eq:overlap_fidelity}, shows that the maximal overlap matches the overlap with $\Phi^k$ itself: $f(\Phi^k) = \braket{\Phi|\Phi^k|\Phi}$.

\section{Proof of \Cref{theorem_overhead}}\label{sec:proof_overhead}

\begin{proof}
    We aim to prove that the sampling overhead $\gamma^{\rho}(\mathcal{I})$ for cutting a single wire using arbitrary two-qubit states $\rho$ is equal to the sampling overhead $\hat{\gamma}^{\rho}(\Phi)$ for simulating the maximally entangled state $\Phi$ using $\rho$. 
    This is demonstrated by showing that $\hat{\gamma}^{\rho}(\Phi)$ forms both an upper and lower bound for $\gamma^{\rho}(\mathcal{I})$.
    
    To show the upper bound, i.e.,  $\gamma^{\rho}(\mathcal{I}) \le \hat{\gamma}^{\rho}(\Phi)$, we construct a wire cut based on a QPD for the maximally entangled state $\Phi$ using the NME state $\rho$ by employing quantum teleportation.
    This teleportation-based realization of a wire cut was used to analyze wire cutting without NME states~\cite{Brenner2023}.
    To incorporate NME states in this construction, consider three qubits, each represented by their respective Hilbert spaces  $A$, $B$, and $C$. 
    Moreover, consider a QPD for the maximally entangled state $\Phi_{BC}$ using $\rho_{BC}$ with an optimal sampling overhead $\hat{\gamma}^{\rho}(\Phi)$ as given in \Cref{eq:optimal_overhead}.
    This QPD is expressed as $\Phi_{BC} = \sum_i c_i \mathcal{F}_i(\rho_{BC})$ with $\mathcal{F}_i \in \locc(B,C)$.
    Applying this QPD within the quantum teleportation protocol governed by operator $\mathcal{T}$ forms a QPD for the identity operator $\mathcal{I}_{A \rightarrow C}$ between qubit $A$ and $C$ when the system $A\otimes B$ is traced out:
    \begin{align}
        \forall \varphi \in D(A): \varphi &= \operatorname{Tr}_{AB}\left[\mathcal{T}(\varphi\otimes\Phi_{BC})\right]\\
        &= \sum_i c_i \operatorname{Tr}_{AB}\left[\mathcal{T}(\varphi \otimes \mathcal{F}_i(\rho_{BC}))\right].
    \end{align}
    Each operator within the QPD involves only local operations and classical communication between $A\otimes B$ and $C$: $\mathcal{T}\in \locc(A\otimes B, C)$ and $\mathcal{F}_i \in \locc(B,C)$. 
    Therefore, this creates a wire cut using NME states $\rho_{BC}$.
    Since the teleportation $\mathcal{T}$ uses the simulated maximally entangled state $\Phi_{BC}$, it does not introduce additional sampling overhead.
    Thus, the sampling overhead for this protocol is determined by the sampling overhead for the QPD for $\Phi_{BC}$.
    Consequently, this teleportation-based realization of a wire cut with a simulated maximally entangled state establishes the upper bound of the optimal sampling overhead using $\rho$, i.e. $\gamma^{\rho}(\mathcal{I}) \le \hat{\gamma}^{\rho}(\Phi)$.

    To establish the lower bound, namely $\gamma^{\rho}(\mathcal{I}) \ge \hat{\gamma}^{\rho}(\Phi)$, we adapt the proof methodology from Brenner et al.~\cite{Brenner2023}, which was originally applied to wire cutting without entangled states. 
    Our adaptation involves employing the newly introduced model of wire cutting using NME states, conceptualized as a non-local three-qubit operator $\mathcal{V}$, as depicted in \Cref{fig:wire_cut_model_ent}. 
    We show that a QPD for $\mathcal{V}$ with an optimal sampling overhead $\gamma^{\rho}(\mathcal{I})$ can be used to construct a QPD for the maximally entangled state $\Phi$ using NME state $\rho$ with the same overhead.

    \begin{figure}
        \centering
        \input{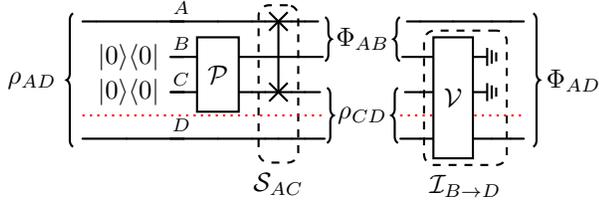}
        \vspace{-0.2cm}
        \caption{Circuit used in the proof, with qubits separated into subsystems $A\otimes B \otimes C$ and $D$ by the red dotted line.}
        \label{fig:proof_circuit}
    \end{figure}

    Consider a four-qubit system described by their respective Hilbert spaces $A$, $B$, $C$, and $D$ partitioned into two subsystems $A\otimes B \otimes C$ and $D$.
    Within the composed system, $A\otimes D$ initially holds an NME state $\rho_{AD}$, while qubits $B$ and $C$ are both in the initial state $\ketbras{0}$ as depicted on the left in \Cref{fig:proof_circuit}.
    Since the only shared state between the two subsystems is $\rho_{AD}$, the bipartite entanglement is solely characterized by this state.

    This initial state is transformed by first applying an operator $\mathcal{P}(\ketbras{00}_{BC}) = \Phi_{BC}$ that prepares the maximally entangled state $\Phi_{BC}$ on qubits $B$ and $C$. 
    Then, qubits $A$ and $C$ exchange states via the application of the swap operator $\mathcal{S}_{AC}$.
    The resulting state, depicted in the middle of \Cref{fig:proof_circuit}, is
    \begin{align*}
        \mathcal{S}_{AC}(\mathcal{P}(\ketbras{00}_{BC}) \otimes \rho_{AD}) = \Phi_{AB}\otimes \rho_{CD}.
    \end{align*}
    As $\mathcal{P}$ and $\mathcal{S}_{AC}$ solely operate on the subsystem $A\otimes B \otimes C$, they qualify as local operators with respect to the subsystems.

    To produce the maximally entangled state $\Phi_{AD}$ using $\rho_{CD}$ after the above state preparation, we implement the identity $\mathcal{I}_{B \rightarrow D}$. 
    This involves transferring the state of qubit $B$ to $D$ via the non-local operator $\mathcal{V}$, as depicted on the right of \Cref{fig:proof_circuit}. 
    To simulate $\mathcal{V}$ with local transformations, consider the QPD achieving the optimal sampling overhead $\gamma^{\rho}(\mathcal{I})$ using $\rho_{CD}$:
    \begin{align}
        \mathcal{V} = \sum_i c_i \mathcal{F}_i,
    \end{align} 
    where $\mathcal{F}_i \in \locc(B\otimes C,D)$.
    The sampling overhead for this QPD is given as $\sum_i |c_i| = \gamma^{\rho}(\mathcal{I})$, as defined by \Cref{eq:min_sampling_overhead_wire_nme}.

    By using this QPD, we can facilitate the quasiprobabilistic simulation of a maximally entangled qubit pair $\Phi_{AD}$ between the two subsystems $A\otimes B \otimes C$ and $D$  from $\Phi_{AB}\otimes \rho_{CD}$:
    \begin{align}
        \Phi_{AD} &= (\mathcal{I}_A \otimes \mathcal{I}_{B \rightarrow D})(\Phi_{AB}\otimes \rho_{CD})\\
        &= \operatorname{Tr}_{BC}[(\mathcal{I}_A \otimes \mathcal{V})(\Phi_{AB}\otimes \rho_{CD})] \\
        &= \sum_i c_i \operatorname{Tr}_{BC}[(\mathcal{I}_A \otimes \mathcal{F}_i)(\Phi_{AB}\otimes \rho_{CD})].\label{eq:phi_ab}
    \end{align}
    As a result, for the operators $\widetilde{\mathcal{F}}_i$ defined as 
    \begin{align}
    \begin{split}
        &\widetilde{\mathcal{F}}_i(\ketbras{00}_{BC} \otimes \rho_{AD}) \\
        &\quad:= (\mathcal{I_A} \otimes \mathcal{F}_i)\mathcal{S}_{AC}(\mathcal{P}(\ketbras{00}_{BC}) \otimes \rho_{AD})
        \end{split}\\
        &\quad= (\mathcal{I_A} \otimes \mathcal{F}_i)(\Phi_{AB}\otimes \rho_{CD}),
    \end{align}
    it holds that $\widetilde{\mathcal{F}}_i \in \locc(A\otimes B \otimes C, D)$.

    Given that appending qubits in an initial state $\ketbras{00}_{BC}$ and subsequently tracing them out are both completely positive and trace-preserving operators that take place in the same subsystem, the operators defined as
    \begin{align}
        \mathcal{G}_i(\rho_{AD}) =  \operatorname{Tr}_{BC} \left[\widetilde{\mathcal{F}}_i(\ketbras{00}_{BC} \otimes \rho_{AD}))\right],
    \end{align}
    are elements of $\locc(A,D)$.

    As a result, we obtain a QPD for the maximally entangled state between $A$ and $D$:
    \begin{align}
        \Phi_{AD} &= \sum_i c_i \operatorname{Tr}_{BC}\left[\widetilde{\mathcal{F}}_i(\ketbras{00}_{BC} \otimes \rho_{AD})\right] \\
        &=  \sum_i c_i \mathcal{G}_i(\rho_{AD}).
    \end{align}
    The sampling overhead of the maximally entangled state using this QPD is $\sum_i |c_i|$, thus for the minimal sampling overhead $\hat{\gamma}^{\rho}(\Phi) \leq \sum_i |c_i|$ has to hold. 
    Since we assumed $\gamma^{\rho}(\mathcal{I}) = \sum_i |c_i|$, this implies $\gamma^{\rho}(\mathcal{I}) \ge \hat{\gamma}^{\rho}(\Phi)$.
    Consequently, this confirms the lower bound, thereby completing the proof.
\end{proof}

\section{Proof of \Cref{theorem_decomposition}}\label{sec:proof_decomposition}
\begin{proof}
As presented in \Cref{eq:tele2}, Pauli errors occur during the teleportation based on the overlap between the resource state and the Bell basis states.
The overlap between the NME state $\Phi^k$ and the Bell basis states is as follows:
\begin{align}
    \bra{\Phi^{I}}\Phi^k\ket{\Phi^{I}} &= \frac{\left(k + 1\right)^{2}}{2 \left(k^{2} + 1\right)}\label{eq:overlap_1}\\
    \bra{\Phi^{X}}\Phi^k\ket{\Phi^{X}} &= 0\label{eq:overlap_2}\\
    \bra{\Phi^{Y}}\Phi^k\ket{\Phi^{Y}} &= 0\label{eq:overlap_3}\\
    \bra{\Phi^{Z}}\Phi^k\ket{\Phi^{Z}} &= \frac{\left(k - 1\right)^2}{2 \left(k^{2} + 1\right)}\label{eq:overlap_4}
\end{align}
Therefore, the teleportation with $\Phi^k$ successfully transmits the state $\rho$ with probability $\bra{\Phi^{I}}\Phi^k\ket{\Phi^{I}}$ and exclusively introduces Pauli $Z$ errors with probability $\bra{\Phi^{Z}}\Phi^k\ket{\Phi^{Z}}$:
\begin{align}
    \mathcal{E}_{\text{tel}}^{\Phi^k}(\rho) = \sum_{\sigma \in\{I,Z\}}\bra{\Phi^{\sigma}}\Phi^k\ket{\Phi^{\sigma}}\sigma\rho\sigma.
\end{align}
Thus, the first summand in \Cref{eq:thrm2} with $U_1=H$ and $U_2=SH$ can be expressed as the following:
\begin{align}
    &\sum_{i\in\{1,2\}} U_i\mathcal{E}_{\text{tel}}^{\Phi^k}(U_i^{\dagger}\rho U_i)U_i^{\dagger}\\
    &= \sum_{i\in\{1,2\}}\sum_{\sigma \in\{I,Z\}}\bra{\Phi^{\sigma}}\Phi^k\ket{\Phi^{\sigma}}U_i \sigma U_i^{\dagger}\rho U_i \sigma U_i^{\dagger}\\
     \begin{split}
        &=2\bra{\Phi^{I}}\Phi^k\ket{\Phi^{I}}\rho 
        \\&\quad+ \sum_{i\in\{1,2\}}\bra{\Phi^{Z}}\Phi^k\ket{\Phi^{Z}}U_i Z U_i^{\dagger}\rho U_i Z U_i^{\dagger}
    \end{split}\\
    &=2\bra{\Phi^{I}}\Phi^k\ket{\Phi^{I}}\rho + \bra{\Phi^{Z}}\Phi^k\ket{\Phi^{Z}} (X \rho X + Y \rho Y), 
\end{align}
because 
\begin{align}
    U_1 Z U_1^{\dagger} = HZH =X \\
    U_2 Z U_2^{\dagger} = (SH)Z(SH)^{\dagger} = Y.
\end{align}
Using the previously calculated overlaps from \Cref{eq:overlap_1,eq:overlap_4}, this results in 
\begin{align}
    &\sum_{i\in\{1,2\}} U_i\mathcal{E}_{\text{tel}}^{\Phi^k}(U_i^{\dagger}\rho U_i)U_i^{\dagger} \\
    &\qquad= \frac{2\left(k + 1\right)^{2}}{2 \left(k^{2} + 1\right)} \rho
    + \frac{\left(k - 1\right)^2}{2 \left(k^{2} + 1\right)}
    (X \rho X + Y \rho Y).
\end{align}
Including the coefficient of the first summand in \Cref{eq:thrm2} results in
\begin{align}
    \begin{split}\label{eq:partial_channel}
    &\frac{k^2+1}{(k+1)^2}\sum_{i\in\{1,2\}} U_i\mathcal{E}_{\text{tel}}^{\Phi^k}(U_i^{\dagger}\rho U_i)U_i^{\dagger}\\
    &=\rho +  \frac{(k-1)^2}{(k+1)^2}\left(\frac{1}{2}X\rho X + \frac{1}{2}Y\rho Y\right).
    \end{split}
\end{align}

This weighted term already contains $\rho$.
However, the second summand represents the error introduced by using the NME state for teleportation.
To compute and correct this error term, we can express it in the following form:
\begin{align}
    &\frac{1}{2}X\rho X + \frac{1}{2}Y\rho Y \\
    \begin{split}
        &=\phantom{+} \frac{(\ketbra{0}{1} + \ketbra{1}{0})\rho(\ketbra{0}{1} + \ketbra{1}{0})}{2} \\
        &\quad+ \frac{(-i\ketbra{0}{1} + i\ketbra{1}{0})\rho(-i\ketbra{0}{1} + i\ketbra{1}{0})}{2}
    \end{split}\\
    &= \ketbra{0}{1}\rho\ketbra{1}{0} + \ketbra{1}{0}\rho\ketbra{0}{1}\\
    &= \braket{1|\rho|1}\ketbras{0} + \braket{0|\rho|0}\ketbras{1}\\
    &= \tr\left[\ketbras{1}\rho\right] \ketbras{0}
    + \tr\left[\ketbras{0}\rho\right] \ketbras{1}\\
    &= \sum_{j \in \{0,1\}}\tr\left[ \ketbras{j}\rho \right]X \ketbras{j}X.
\end{align}
This can be implemented by a single circuit that performs a measurement and initializes the inverse measurement result.

Using this representation of the error in \Cref{eq:partial_channel} and rearranging the equation shows
\begin{equation}
    \begin{split}
    \rho =&\phantom{-} \frac{k^2+1}{(k+1)^2}\sum_{i\in\{1,2\}} U_i\mathcal{E}_{\text{tel}}^{\Phi^k}(U_i^{\dagger}\rho U_i)U_i^{\dagger}\\
    &- \frac{(k-1)^2}{(k+1)^2}\sum_{j\in\{0,1\}} \tr\left[ \ketbras{j}\rho \right]X \ketbras{j}X,
    \end{split}
\end{equation}
which matches the QPD as outlined in \Cref{theorem_decomposition}.
It consists of two teleportation circuits and one measure-and-prepare circuit as depicted in \Cref{fig:cut_with_entanglement}.
Lastly, the sampling overhead associated with this decomposition, quantified by the sum of the absolute values of the coefficients of the circuits, is given by
\begin{align}
    &2 \frac{k^2+1}{(k+1)^2} + \frac{(k-1)^2}{(k+1)^2} \\
    &= 3 \frac{k^2+1}{(k+1)^2} - \frac{2k}{(k+1)^2} \\
    &= 4 \frac{k^2+1}{(k+1)^2} - \frac{(k+1)^2}{(k+1)^2} \\
    &= \frac{4(k^2+1)}{(k+1)^{2}} -1.
\end{align}
\end{proof}

\end{document}